\documentclass{article}
\usepackage[utf8]{inputenc}
\usepackage{amsmath}
\usepackage{amssymb}
\usepackage{amsthm}
\usepackage{algpseudocode}
\usepackage{hyperref}
\theoremstyle{plain}

\usepackage{url}
\usepackage{float}
\usepackage{comment}


\DeclareMathOperator{\lca}{\text{lca}}

\newtheorem{thm}{Theorem}[section]
\newtheorem{claim}[thm]{Claim}
\newtheorem{lemma}[thm]{Lemma}

\theoremstyle{definition}
\newtheorem{definition}[thm]{Definition}

\theoremstyle{remark}

\title{Computing Truncated Metric Dimension of Trees}
\author{Paul Gutkovich, Zi Song Yeoh}
\begin{document}

\maketitle

\begin{abstract}
Let $G=(V,E)$ be a simple, unweighted, connected graph. Let $d(u,v)$ denote the distance between vertices $u,v$. A resolving set of $G$ is a subset $S$ of $V$ such that knowing the distance from a vertex $v$ to every vertex in $S$ uniquely identifies $v$. The metric dimension of $G$ is defined as the size of the smallest resolving set of $G$. We define the $k$-truncated resolving set and $k$-truncated metric dimension of a graph similarly, but with the notion of distance replaced with $d_k(u,v) := \min(d(u,v),k+1)$. 

In this paper, we demonstrate that computing $k$-truncated dimension of trees is NP-Hard for general $k$. We then present a polynomial-time algorithm to compute $k$-truncated dimension of trees when $k$ is a fixed constant.
\end{abstract}

\section{Introduction}
For any set of three non-collinear points in the Euclidean plane, any point in the plane can be determined by its distances to the points in the set. In general, in $d$-dimensional space any set of $d+1$ linearly independent points allows us to determine any point by only knowing distances to those points.

If the Euclidean space is replaced by graphs, the problem becomes more challenging. Our task is to find the smallest set of vertices in a graph such that any point can be uniquely determined based on its distances to the vertices in the set. The size of this smallest set is called the graph's metric dimension. For a positive integer $k$, we define $k$-truncated metric dimension, which is a similar concept, with the exception that the usual distance metric is replaced by the $k$-truncated distance metric. For two vertices $u,v$ in a graph, their $k$-truncated distance, denoted $d_k(u,v)$, is defined as $\min(d(u,v), k+1)$.

A lot of previous work has been done on analyzing metric dimension (see \cite{bib:dim-survey} for a survey on current results). In this paper, our main focus will be on algorithms in computing truncated metric dimension of trees. 

Computing metric dimension of trees is known to be easy (doable in linear-time); see \cite{bib:tree-dim-2000,bib:set-cover-dim-upper,bib:slater,bib:hararymelter} for example. In contrast, less is known about algorithms for $k$-truncated metric dimension. It is also known that computing the $k$-truncated metric dimension for a general graph is NP-complete \cite{bib:dimk-np}. For trees, there exists a polynomial-time algorithm to compute the $1$-truncated metric dimension (also known as adjacency dimension) of a tree based on dynamic programming \cite{bib:1-truncated-tree}. We show that for general values of $k$, there is no polynomial-time algorithm for computing $k$-truncated metric dimension of trees unless $\mathbf{P}=\mathbf{NP}$. In contrast, we show that for any constant $k$, there is a polynomial-time algorithm for computing $k$-truncated metric dimension of trees.

In Section~\ref{sec:basic}, we provide basic definitions and notations that will be used throughout the paper. In Section~\ref{sec:np-hardness}, we prove that computing $k$-truncated metric dimension is NP-hard for general values of $k$. In Section~\ref{sec:poly-alg}, we show that if $k$ is a fixed constant, there is a polynomial-time algorithm for computing $k$-truncated dimension of trees.

\section{Preliminary Definitions and Notations}\label{sec:basic}
Throughout this paper, we let $G=(V,E)$ be a simple, undirected, connected graph, and $n=|V|$. For any two vertices $u,v\in V$, define the graph distance $d(u,v)$ to be the number of edges in the shortest path between $u$ and $v$. For a positive integer $k$, we define the $k$-truncated distance $d_k(u,v)=\min(d(u,v), k+1)$. We make some preliminary definitions that will be used throughout the paper.

\begin{definition}[$k$-close, $k$-far, and $k$-dominated]
For two vertices $u,v \in V$, we say that $u$ is \emph{$k$-close} to $v$ if $d(u,v)\leq k$, and $u$ is \emph{$k$-far} from $v$ otherwise. We say a vertex $u$ is \emph{$k$-dominated} by a set $S \subseteq V$ if it is $k$-close to some vertex $v \in S$.
\end{definition}

\begin{definition}[$k$-distinguishing vertex]\label{def:distinguishing_vertex}
We say a vertex $w$ \emph{$k$-distinguishes} vertices $u$ and $v$ if $d_k(u,w) \neq d_k(v,w)$. We call $w$ a \emph{$k$-distinguishing vertex} of $(u,v)$.
\end{definition}

\begin{definition}[$k$-truncated (dominating) resolving set]\label{def:resolving_set}
A subset $S\subseteq V$ is called a \emph{$k$-truncated resolving set} of $G$ if for any $u,v \in V$, there exists some vertex $x \in S$ that $k$-distinguishes $u$ and $v$. We also define a $k$-truncated dominating resolving set of $G$ to be a set $S'$ that satisfies the following conditions:
\begin{itemize}
    \item $S'$ is a $k$-truncated resolving set of $T$.
    \item Every element of $V$ is $k$-dominated by $S'$.
\end{itemize}
\end{definition}
\begin{definition}[$k$-truncated metric dimension]\label{def:metric_dimension}
We define the $k$-truncated metric dimension of $G$ as the size of the smallest $k$-truncated resolving set of $G$. The $k$-truncated metric dimension of a graph $G$ is denoted as $\dim_k(G)$. We define the minimum $k$-truncated dominating resolving set of $G$ to be the smallest $k$-truncated dominating resolving set of $G$. 
\end{definition}

\section{NP-Hardness of Computing Truncated Metric Dimension on Trees}\label{sec:np-hardness}
Given that computing metric dimension of general graphs is hard, it is no surprise that computing $k$-truncated metric dimensions in general graphs is also hard. It has been shown that computing the $k$-truncated metric dimension for a general graph is NP-hard \cite{bib:dimk-np}. 

In this section, we will show that computing $k$-truncated metric dimension of general trees is NP-hard for general values of $k$.
\begin{thm}\label{thm:np-hardness}
Unless $\mathbf{P}=\mathbf{NP}$, for any constant $c>0$, there is no algorithm that computes $k$-truncated metric dimension on trees in $O(n^c)$ time for all $k \le n$.
\end{thm}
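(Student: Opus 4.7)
The plan is to exhibit a polynomial-time reduction from a known NP-hard problem---the natural candidates being 3-SAT or Hitting Set (a variant of Set Cover), which already underlie the general-graph reduction of \cite{bib:dimk-np}---to the problem of computing $\dim_k$ on a tree when $k$ is part of the input. Once NP-hardness is established, the theorem follows immediately: any algorithm running in $O(n^c)$ for all $k\le n$ would, composed with the polynomial reduction, solve the NP-hard source problem in polynomial time, contradicting $\mathbf{P}\neq\mathbf{NP}$.

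The construction I would pursue outputs, from an instance $I$, a triple $(T(I),k(I),s(I))$ with $k(I)\le|V(T(I))|$ such that $\dim_{k(I)}(T(I))\le s(I)$ if and only if $I$ is a YES-instance. The tree would be assembled from two kinds of gadgets hanging off a common backbone. \emph{Selector gadgets}, one per variable (or one per set of the source instance), are designed so that any near-minimum $k$-truncated resolving set is forced to make a discrete choice inside each gadget; the standard way to force this is a pair of twin-like leaves or short branches whose only $k$-distinguishers are internal to the gadget. \emph{Constraint gadgets}, one per clause (or per element), are pairs of vertices whose set of $k$-distinguishers is engineered to be exactly the selector vertices corresponding to literals that satisfy the clause (respectively, sets that cover the element). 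The truncation parameter $k$ plays the key role of enforcing this locality, so that a selector's candidate distinguishing vertex $k$-resolves only the constraint pairs it is supposed to.

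The correctness argument then splits into two directions. For soundness, given a satisfying assignment (or valid hitting set), one picks one canonical vertex per selector gadget plus a fixed overhead of padding resolvers, and verifies pair-by-pair that the resulting set $k$-distinguishes every pair in $T(I)$. For completeness, one argues structurally that any $k$-resolving set of size $\le s(I)$ must contain a vertex inside each selector gadget---this is where the twin-like forcing is used---and then reads off the source-problem solution from the canonical choices made inside those gadgets.

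The main obstacle I anticipate is the rigidity of tree metrics: in contrast to general graphs, one cannot freely prescribe which selectors are close to which constraints, since the chosen pairwise distances must be realizable in an acyclic structure. I expect this to force a carefully tuned backbone (likely a path or star whose length is coordinated with $k$) together with duplicated copies of selector vertices routed into distinct branches, so that the source instance's incidence structure can be embedded without creating cycles while contributing only a controlled, accountable amount to the optimum. Balancing this duplication bookkeeping against the threshold $s(I)$---so that it sharply separates YES from NO instances---is where I expect the bulk of the technical work to lie.
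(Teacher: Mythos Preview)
Your proposal is a plan rather than a proof, and the hard step you yourself flag---embedding an arbitrary clause/variable or set/element incidence structure into a tree metric with the right threshold bookkeeping---is left entirely open. That is precisely the crux; without it there is no reduction. So as it stands this is not a proof of the theorem.

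It is also worth knowing that the paper takes a route that sidesteps your obstacle rather than confronting it. The source problem is not 3-SAT or Hitting Set but $3$-Dimensional Matching, and in fact its \emph{inapproximability} (Theorem~\ref{thm:3dm-nphard}) rather than its exact decision version. Given an instance with triples $S_1,\dots,S_r\subseteq\{1,\dots,m\}$, the paper sets $k=2m+2$ and builds a tree with a root $u$ and one subtree $T_i$ per triple: a long spine of length $2k$ hanging from $u$, with three pendant paths branching off at depths $a_{i,1},a_{i,2},a_{i,3}$ (the elements of $S_i$), each pendant having its leaf $g_{i,c}$ at distance exactly $2a_{i,c}+1$ from $u$. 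The point is that element identity is encoded as \emph{depth from the root}, so a single vertex near $u$ $k$-distinguishes $g_{i_1,c_1}$ from $g_{i_2,c_2}$ iff $a_{i_1,c_1}\neq a_{i_2,c_2}$. Combined with a forced marker deep in every spine (to $k$-dominate $w_{i,2k}$), one gets that the minimum $k$-truncated dominating resolving set has size exactly $2r-M$, where $M$ is the optimum matching. Via Lemma~\ref{lem:dim_mld_close}, $\dim_k(T)\in\{2r-M-1,2r-M\}$, so computing $\dim_k(T)$ exactly pins down $M$ to within an additive $1$, contradicting the $95/94$ inapproximability.

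Two takeaways for your plan. First, the paper never tries to realise an arbitrary bipartite incidence in a tree; by choosing 3DM the only constraint to encode is ``these $3(M{+}1)$ numbers are pairwise distinct,'' which a single distance-to-root measurement captures perfectly. If you insist on SAT or Hitting Set you will have to manufacture that flexibility by duplication, and controlling the resulting slack against a sharp threshold $s(I)$ is genuinely delicate. Second, because the paper only determines $\dim_k$ up to $\pm1$, it needs an inapproximability result on the source side; your exact-threshold scheme would have to nail $\dim_k(T(I))$ exactly on both YES and NO instances, which is a strictly harder construction goal.
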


Throughout this section, let $f_k(T)$ denote the size of the smallest $k$-truncated dominating resolving set of $T$. For technical reasons, it would be easier to deal with $f_k(T)$. The two quantities $f_k(T)$ and $\dim_k(T)$ (the truncated metric dimension of $T$) differ by at most $1$.
\begin{lemma}\label{lem:dim_mld_close}
For any tree $T$ and positive integer $k$, we have $\dim_k(T)\leq f_k(T)\leq \dim_k(T)+1$.
\end{lemma}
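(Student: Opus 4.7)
The plan is to handle the two inequalities separately, with the left one being immediate and the right one following from a short counting argument about undominated vertices.

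For the lower bound $\dim_k(T) \leq f_k(T)$, I would simply note that every $k$-truncated dominating resolving set is, by definition, a $k$-truncated resolving set. Therefore the minimum size of a $k$-truncated resolving set is at most the minimum size of a $k$-truncated dominating resolving set, which is exactly the first inequality.

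For the upper bound $f_k(T) \leq \dim_k(T)+1$, I would start with any minimum $k$-truncated resolving set $S$ of $T$, with $|S|=\dim_k(T)$. The key claim is that at most one vertex of $T$ fails to be $k$-dominated by $S$. Suppose for contradiction that two distinct vertices $u,v \in V$ are both not $k$-dominated by $S$. Then for every $w\in S$ we have $d(u,w)\geq k+1$ and $d(v,w)\geq k+1$, so $d_k(u,w)=d_k(v,w)=k+1$. Hence no vertex of $S$ $k$-distinguishes $u$ from $v$, contradicting the fact that $S$ is a $k$-truncated resolving set.

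Given this claim, the conclusion follows quickly. If every vertex is already $k$-dominated by $S$, then $S$ itself is a $k$-truncated dominating resolving set of size $\dim_k(T)$, so $f_k(T)\leq \dim_k(T)$. Otherwise let $u$ be the unique vertex not $k$-dominated by $S$, and form $S':=S\cup\{u\}$. Since $u$ is $k$-close to itself and every other vertex was already $k$-dominated by $S$, the set $S'$ $k$-dominates all of $V$; and since enlarging a resolving set keeps it resolving, $S'$ remains a $k$-truncated resolving set. Thus $S'$ is a $k$-truncated dominating resolving set of size $\dim_k(T)+1$, yielding $f_k(T)\leq \dim_k(T)+1$. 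There is no real obstacle here; the argument is essentially the observation that in the truncated metric, two undominated vertices are forced to look identical to every vertex of $S$, and notably the argument does not use the tree structure at all.
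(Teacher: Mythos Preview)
Your proposal is correct and follows essentially the same approach as the paper: both prove the left inequality by noting that dominating resolving sets are resolving sets, and both prove the right inequality by showing that a minimum $k$-truncated resolving set can have at most one undominated vertex (else two such vertices would be indistinguishable), then adding that vertex if it exists. Your remark that the tree hypothesis is unnecessary is also correct.
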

\begin{proof}
Since every $k$-truncated dominating resolving set of $T$ is automatically a $k$-resolving set of $T$, we get $\dim_k(T)\leq f_k(T)$. Let $R$ be a minimal $k$-truncated resolving set of $T$. We have two cases.
\begin{itemize}
    \item \textbf{Case 1: There exists a vertex $a\in V(T)$ such that $d(a,x)>k$ for all $x\in R$.}
    
    In this case, at most one such vertex $a$ can exist, because if two such vertices $a$ and $b$ exist, then $d_k(a,x)=d_k(b,x)=k+1$ for all $x\in R$, contradicting the definition of $R$. 
    Thus, if we define $M=R\cup \{a\}$, then $M$ will be a $k$-truncated dominating resolving set. Hence, $f_k(T)\leq |M|=\dim_k(T)+1$.
    \item \textbf{Case 2: For all vertices $a\in V(T)$, there exists some $x\in R$ such that $d(a,x)\leq k$.}\\ In this case, $R$ satisfies both conditions for a $k$-truncated dominating resolving set, so $f_k(T)\leq |R|=\dim_k(T)$.
\end{itemize}
In both cases, we get $f_k(T)\leq \dim_k(T)+1$.
\end{proof}
In order to prove NP-hardness, we reduce our problem to $3$-dimensional matching, which is known to be NP-hard.
\begin{definition}[$3$-Dimensional Matching]
Let $m$ be a positive integer and let $X,Y,Z$ be pairwise disjoint sets of integers with size $m$. Let $U$ be a subset of $X\times Y\times Z$. Every element in $U$ is of the form $(x_i,y_i,z_i)$, where $x_i\in X$, $y_i\in Y$, $z_i\in Z$. The $3$-dimensional matching problem is the problem of computing the largest $W \subseteq U$ such that for any $(x_i,y_i,z_i),(x_j,y_j,z_j)\in W$, we have $x_i\neq x_j, y_i\neq y_j, z_i\neq z_j$. 
\end{definition}
We will require the following NP-hardness result.
\begin{thm}\label{thm:3dm-nphard}\cite{bib:3dm-approx}
Approximating $3$-dimensional matching to a multiplicative factor of $\frac{95}{94}$ is NP-hard.
\end{thm}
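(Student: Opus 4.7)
The plan is to establish this inapproximability by a gap-preserving reduction starting from a known hardness result with an explicit numerical gap. A natural starting point is Håstad's theorem giving $(1-\varepsilon)$ vs.\ $(7/8+\varepsilon)$ hardness for MAX-E3-SAT, or better still a bounded-occurrence strengthening such as the hardness of MAX-E3-SAT-B (each variable appearing in at most $B$ clauses), since bounded occurrence is what will let me build gadgets without blowing up the instance size in a way that wrecks the gap. Working directly from an unbounded-occurrence hardness would likely force an arithmetization step that loses constants, whereas bounded-occurrence starting points (Berman-Karpinski-Scott or related) preserve concrete numerical ratios through subsequent reductions.

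The reduction itself would proceed in two conceptual stages. First, I would reduce MAX-E3-SAT-B to MAX-3-Set-Packing by building, for each clause, a ``clause gadget'' of a few fixed 3-sets and, for each variable, a ``consistency gadget'' whose packing structure forces an unambiguous choice of a truth value for that variable. These gadgets are designed so that a satisfying assignment of value $\alpha$ yields a packing of size exactly $c_1 \cdot N + \alpha \cdot N_{\text{clauses}}$ for explicit constants, while any packing of size $\beta N$ can be decoded into an assignment satisfying at least $\beta'$ fraction of clauses with $\beta'$ a known affine function of $\beta$. Second, I would embed the 3-Set-Packing instance into 3-Dimensional Matching by assigning each element of the universe to one of $X$, $Y$, $Z$ and padding each 3-set with auxiliary ``dummy'' elements, chosen so that each dummy appears in only one triple and so every optimal matching is in bijection with an optimal 3-set packing. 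Provided the dummy count is kept linear in the instance size, no dilution of the gap occurs.

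Tracking the completeness and soundness through this pipeline gives a hardness ratio of the form $\frac{a - \varepsilon}{a - \delta - \varepsilon}$ for explicit $a$, $\delta$ determined by the gadget sizes and the starting SAT gap. The entire optimization problem for the proof thus reduces to choosing gadgets and parameters so that this ratio equals $\tfrac{95}{94}$ (or more), which is where all the real work lies.

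The main obstacle is purely quantitative: routine reductions of this type give weaker constants such as $\tfrac{98}{97}$ or worse, and squeezing the factor down to $\tfrac{95}{94}$ requires minimal-size gadgets and a tight accounting in which essentially nothing is wasted. I expect to have to follow (and re-derive) the precise gadget construction of Chlebík and Chlebíková in \cite{bib:3dm-approx}, whose numerical analysis yields exactly this constant; a noticeably simpler reduction would almost certainly give a weaker inapproximability factor and would not suffice.
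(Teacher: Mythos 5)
This statement is not proved in the paper at all: it is quoted as a black box from \cite{bib:3dm-approx} and used downstream in the reduction of Theorem~\ref{thm:np-hardness}. So the relevant question is whether your proposal would constitute an independent proof, and it would not. What you have written is a plausible high-level roadmap for a gap-preserving reduction (bounded-occurrence CSP $\rightarrow$ 3-Set-Packing via clause/consistency gadgets $\rightarrow$ 3DM via tripartition and dummies), but it contains no concrete gadget, no explicit completeness/soundness constants, and no verification that the final ratio is at least $\tfrac{95}{94}$. Since the entire content of the theorem is that specific constant --- ``NP-hard to approximate within \emph{some} constant'' is a much weaker and much easier statement --- deferring the numerical accounting to ``follow and re-derive the construction of Chlebík and Chlebíková'' leaves the whole substance of the claim unproved. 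As a citation-with-sketch this is honest, but as a proof it has a gap exactly where the theorem lives.

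Two further cautions if you do intend to carry this out. First, the actual Chlebík--Chlebíková argument does not start from MAX-E3-SAT; it starts from Håstad's tight hardness for systems of linear equations mod~2 in three variables (MAX-E3-LIN-2), pushed through their amplifier constructions to get bounded-occurrence versions before the gadget reduction to 3DM --- the parity structure of equation gadgets is part of what makes the tight constant achievable, so substituting E3-SAT-B is not a cosmetic change and will likely land you at a weaker ratio. Second, for the purposes of this paper you do not actually need to reprove the theorem: Theorem~\ref{thm:np-hardness} only needs that an additive-$1$ (hence, on instances with large optimum, a $\tfrac{95}{94}$-multiplicative) approximation of 3DM is NP-hard, and citing \cite{bib:3dm-approx} for that is the intended and appropriate move.
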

We now prove Theorem~\ref{thm:np-hardness}.
\begin{proof}[Proof of Theorem~\ref{thm:np-hardness}]
We will prove this by reduction from the $3$-dimensional matching problem. Fix an instance of the $3$-dimensional matching problem, i.e. let $m$ be a positive integer, and let $X,Y,Z$ be pairwise disjoint subsets of $\{1,2, \ldots, m\}$ with size $n$. Let $U$ be a subset of $X\times Y\times Z$. Let $M$ be the maximum number of disjoint elements of $U$.

Let $r=|U|$, and let $S_i = \{x_i,y_i,z_i\}$ be the $i^{th}$ element of $U$ for $i\in \{1,2,\ldots, r\}$. We may assume without loss of generality $r>2$ and $r-M\geq 2$ (the latter assumption can be made by appending an independent instance whose answer is at least $2$ less than the number of triples). Let $k=2m+2$. Let $T$ be a tree with root $u$, with children $u_1, u_2, \ldots, u_r$. Let $T_i$ be the subtree rooted at $u_i$. For every $i\in \{1, 2, \ldots, r\}$, construct $T_i$ as follows:

\begin{itemize}
    \item Let $w_{i,1}, w_{i,2}, \ldots w_{i,2k}$ be vertices such that $w_{i,1}$ is the root of $T_i$ (i.e. $w_{i,1}=u_i$) and $w_{i,j}$ is connected to $w_{i,j+1}$ for all $1 \le j \le 2k-1$. Let $a_{i,1}, a_{i,2}, a_{i,3}$ be the distinct elements of $S_i$. Create vertices 
    $x_{i,1,1},$ $x_{i,1,2}, \ldots, x_{i,1,a_{i,1}+1}$ such that $x_{i,1,1}$ is a child of $w_{i,a_{i,1}}$ and $x_{i,1,j+1}$ is the child of $x_{i,1,j}$ for all $1 \le j \le a_{i,1}$. Repeat this process for $a_{i,2}, a_{i,3}$. For brevity, define $g_{i,c}:=x_{i,c,a_{i,c}+1}$ for $c\in \{1,2,3\}$.
\end{itemize}

After $T$ is constructed, let $R$ be a smallest $k$-truncated dominating resolving set of $T$. For $i\in\{1, 2, \ldots, r\}$, define $p_i:=|T_i\cap R|$. Note that $p_i\geq 1$ for all $i$, because there must be an element of $R$ within distance $k$ of $w_{i, 2k}$, implying there must be an element of $R$ in $T_i$.
\begin{claim}
We have $|R|\geq 2r-M$.
\end{claim}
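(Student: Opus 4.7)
The plan is to show that the set $I_1 := \{i : p_i = 1\}$ yields a valid 3-dimensional matching drawn from $U$, so that $|I_1| \leq M$; together with $p_i \geq 2$ for $i \notin I_1$ this gives $|R| \geq \sum_i p_i \geq |I_1| + 2(r - |I_1|) = 2r - |I_1| \geq 2r - M$.

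The first step is to pin down the unique vertex $v_i \in R \cap T_i$ when $p_i = 1$. Every vertex outside $T_i$ is at distance strictly greater than $k$ from $w_{i,2k}$, so only $v_i$ can $k$-dominate $w_{i,2k}$. Branch vertices $x_{i,c,s}$ have distance $s + (2k - a_{i,c}) > k$ to $w_{i,2k}$ since $s \geq 1$ and $a_{i,c} \leq m < k = 2m+2$, so $v_i$ must lie on the spine, say $v_i = w_{i,j_i}$ with $j_i \geq k$.

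The main step is to show that if $p_i = p_j = 1$ and $a \in S_i \cap S_j$, then $R$ fails to $k$-distinguish some pair, yielding a contradiction. Write $a = a_{i,c} = a_{j,c'}$ and consider the pair $(g_{i,c}, g_{j,c'})$: both tips lie at depth $2a$ below their respective roots $u_i$ and $u_j$. For any $w \notin T_i \cup T_j$ (including $\rho$), the paths from $w$ to $u_i$ and to $u_j$ both pass through $\rho$ and have equal length, so $d(w, g_{i,c}) = d(w, u_i) + 2a = d(w, u_j) + 2a = d(w, g_{j,c'})$ and $w$ cannot distinguish them. For $v_i$ itself, using $j_i \geq k > a$, we compute
\begin{align*}
d(v_i, g_{i,c}) &= (j_i - a) + (a+1) = j_i + 1 \geq k + 1, \\
d(v_i, g_{j,c'}) &= (j_i - 1) + 2 + 2a = j_i + 1 + 2a \geq k + 1,
\end{align*}
so both truncate to $k+1$; by symmetry $v_j$ also fails. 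Hence no vertex of $R$ distinguishes $(g_{i,c}, g_{j,c'})$, contradicting the resolving property.

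The main obstacle is ruling out \emph{every} possible distinguisher in this second step, in particular subtle candidates like $\rho$ or deep vertices in some other $T_\ell$. The choice $k = 2m+2$ is what makes the inequalities tight, ensuring $k > a_{i,c}$ at every attachment level; this single fact drives both the depth-matched branch symmetry that kills all external distinguishers and the $\geq k+1$ bounds that prevent $v_i$ and $v_j$ from seeing the asymmetry between the tips.
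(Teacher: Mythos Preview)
Your proof is correct and follows essentially the same strategy as the paper: locate the single marked vertex $v_i$ on the spine at depth $\geq k$ whenever $p_i=1$, and argue that two such indices $i,j$ with $S_i\cap S_j\neq\emptyset$ would leave the pair of tips $(g_{i,c},g_{j,c'})$ undistinguished by $R$. Your direct bound $|I_1|\le M$ is the contrapositive of the paper's contradiction argument, and your three-way case analysis on the location of a would-be distinguisher $w\in R$ is a slightly cleaner substitute for the paper's reduction to the single vertex $y$ closest to the root.
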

\begin{proof}
Assume for contradiction that $|R|<2r-M$, i.e. $2r-|R|>M$. First, note that $\sum_i p_i = |R|$ if $u\not\in R$ and $\sum_i p_i = |R|-1$ if $u\in R$. Because $p_i\geq 1$ for all $i$, there must be at least $2r-|R|$ values of $i$ such that $p_i=1$. From our assumption, more than $M$ values of $i$ satisfy $p_i=1$. Without loss of generality, assume that $p_1=p_2=\cdots=p_{M+1}=1$. 

Consider some subtree $T_i$, where $i\in\{1, 2, \ldots, M+1\}$. It has only one marked vertex $v_i$ (a vertex that is in $R$), which must be within distance $k$ of $w_{i,2k}$. This implies that $v_i\in \{w_{i,k}, w_{i,k+1}, \ldots, w_{i, 2k}\}$. For any possible $v_i$, and for any $c\in\{1,2,3\}$, we get $d(v_i, g_{i,c})\geq d(w_{i,k}, g_{i,c})=d(w_{i,k},w_{i,a_{i,c}})+d(w_{i,a_{i,c}}, g_{i,c})=(k-a_{i,c})+(a_{i,c}+1)=k+1$. This implies that $v_i$ is more than $k$ away from any $g_{j,c}$ for any $j\in \{1, 2, \ldots, r\}$ and any $c\in \{1, 2, 3\}$. This implies that for any $i\in \{1, 2, \ldots, M+1\}$ and any $c\in \{1, 2, 3\}$, there must be some vertex $y\not\in T_1\cup T_2\cup \cdots \cup T_{M+1}$ such that $y$ is within $k$ of $g_{i,c}$. Note that to find this $y$, we only need to consider one vertex, which is the vertex closest to $u$ in $\{u\}\cup T_{M+2}\cup T_{M+3}\cup \cdots \cup T_r$. From now on, $y$ will denote this vertex. Thus, we have that if $g_{i_1,c_1}$ and $g_{i_2, c_2}$, with $i_1,i_2 \in \{1, 2, \ldots, M+1\}, c_1, c_2\in\{1, 2, 3\}$ are $k$-distinguished by some vertex in $R$ (i.e. the $k$-truncated distances from them to the vertex are different), then they are $k$-distinguished by $y$. We show that $y$ cannot $k$-distinguish all pairs $g_{i_1,c_1}$ and $g_{i_2, c_2}$.

Note that if $y$ $k$-distinguishes some $g_{i_1,c_1}$ and $g_{i_2, c_2}$, with $i_1,i_2 \in \{1, 2, \ldots, M+1\}, c_1, c_2\in\{1, 2, 3\}$, then $u$ $k$-distinguishes them as well. So, we may assume $y=u$.

Note that for some $i\in\{1, 2, \ldots, M+1\}, c\in\{1, 2, 3\}$, we have $d(u,g_{i,c})=d(u,w_{i,a_{i,c}})+d(w_{i,a_{i,c}}, x_{i,c,a_{i,c}+1})=a_{i,c}+a_{i,c}+1=2a_{i,c}+1$. If $u$ $k$-distinguishes all pairs $g_{i_1,c_1}, g_{i_2, c_2}$, with $i_1,i_2 \in \{1, 2, \ldots, M+1\}, c_1, c_2\in\{1, 2, 3\}$, then $\min(k+1, 2a_{i,c}+1)$ is distinct for all $i \in \{1, 2, \ldots, M+1\}, c \in\{1, 2, 3\}$. Since $k=2m+2>2a_{i,c}$, all $a_{i,c}$ are distinct for all $i \in \{1, 2, \ldots, M+1\}, c \in\{1, 2, 3\}$. However, this is equivalent to saying that $S_1, S_2, \ldots, S_{M+1}$ are all pairwise disjoint, which contradicts the maximality of $M$. This gives a contradiction, which implies $|R| \ge 2r-M$.
\end{proof}

\begin{claim}
There exists a $k$-truncated dominating resolving set $R$ of $T$ with size $2r-M$.
\end{claim}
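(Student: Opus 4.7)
The plan is to construct $R$ explicitly from a maximum $3$-dimensional matching $W$ of $U$, and then verify both the domination and resolving conditions. Let $\mathcal{M}\subseteq[r]$ be the set of indices of the triples in $W$, so $|\mathcal{M}| = M$; I would define
\[
R := \{w_{i,k} : i \in [r]\}\cup\{u_i : i\notin\mathcal{M}\},
\]
giving $|R| = r + (r-M) = 2r-M$. The motivation is that, by the previous claim's lower bound, any single marked vertex in $T_i$ must lie in $\{w_{i,k},\ldots,w_{i,2k}\}$; among these, $w_{i,k}$ is the vertex closest to the top of $T_i$, which makes it maximally useful for distinguishing interior vertices while still dominating the deep end $w_{i,2k}$.

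Domination is then direct: $w_{i,k}$ covers every vertex of $T_i$ except the three leaves $g_{i,1},g_{i,2},g_{i,3}$ (each at distance exactly $k+1$), and each $g_{i,c}$ is covered either by $u_i\in R$ (distance $2a_{i,c}\le 2m<k$) when $i\notin\mathcal{M}$, or by any $u_{j'}\in R$ with $j'\notin\mathcal{M}$ (distance $2+2a_{i,c}\le 2m+2 = k$) when $i\in\mathcal{M}$; such a $j'$ exists because $r-M\geq2$. For resolving, I would split by the location of the pair $(p,q)$. Intra-subtree pairs in $T_i$ are handled by combining $w_{i,k}$ with any external $u_{j'}\in R$: the small family of pairs on which $w_{i,k}$ fails (the three $g_{i,c}$'s among themselves, the symmetric pairs $w_{i,j},w_{i,2k-j}$, and certain main-path/branch collisions) all have distinct values of $d(u,\cdot)$, which $u_{j'}$ records via $d_k(u_{j'},\cdot) = \min(d(u,\cdot)+1,k+1)$; in particular, $d(u, g_{i,c}) = 2a_{i,c}+1$ takes three distinct values because $X,Y,Z$ are pairwise disjoint.

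For cross-subtree pairs $p\in T_i, q\in T_j$ with $i\neq j$, every path from $T_j$ to $w_{i,k}$ passes through $u$, so $d_k(w_{i,k},q) = k+1$; thus $w_{i,k}$ distinguishes $p$ from $q$ unless $p\in\{g_{i,1},g_{i,2},g_{i,3}\}$, and by symmetry $w_{j,k}$ reduces the argument to the residual case $p = g_{i,c},\ q = g_{j,c''}$. If at least one of $i,j$ lies outside $\mathcal{M}$, a short case check among $u_i, u_j$, or a third $u_{j'}\in R$ produces a distinguishing vertex, since these candidates have truncated distances to $g_{i,c}, g_{j,c''}$ of the form $\{2a_{i,c},\ 2+2a_{i,c}\}$ and $\{2a_{j,c''},\ 2+2a_{j,c''}\}$ respectively, and the various pairings cannot all collide simultaneously. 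If both $i,j\in\mathcal{M}$, then $u_i,u_j\notin R$ and we are forced to use some $u_{j'}\in R$ with $j'\notin\mathcal{M}$, whose truncated distances $2+2a_{i,c}$ and $2+2a_{j,c''}$ differ precisely because $S_i\cap S_j = \emptyset$ forces $a_{i,c}\neq a_{j,c''}$. Pairs involving the root $u$ are handled by any $w_{j,k}$ with $j\neq i$, since $d_k(w_{j,k},u) = k$ but $d_k(w_{j,k},q) = k+1$ for any $q\in T_i$. The principal obstacle is precisely this last subcase $(p,q) = (g_{i,c},g_{j,c''})$ with $i,j\in\mathcal{M}$, which is exactly where the matching's maximality becomes essential — mirroring its use in the previous claim's lower bound and thereby closing the loop between the two directions.
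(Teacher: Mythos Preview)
Your construction is identical to the paper's (they write $b_i=w_{i,k}$ and $t_i=u_i$ for $i>M$), and your domination and resolving arguments are correct; the only difference is presentational, in that the paper verifies resolution by showing how to decode an unknown vertex $q$ from its distance vector, whereas you run a direct pairwise case analysis. One small wording slip: in the final subcase you invoke the matching's \emph{maximality}, but what you actually use (and all that is needed) is disjointness of $S_i$ and $S_j$ for $i,j\in\mathcal{M}$.
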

\begin{proof}
Assume that the largest non-overlapping subset of $U$ is $S_1, S_2, \ldots, S_M$. Construct $R$ by including $b_i:=w_{i,k}$ for all $i\in\{1, 2, \ldots, r\}$ (call these bottom vertices) and $t_i:=u_i$ for all $i\in \{M+1, M+2, \ldots, r\}$ (call these top vertices). We show that this set $R$ works. 

First, we show that all vertices are within distance $k$ of some vertex in $R$. Note that $u$ is close to $t_r$. Let $v$ be any vertex in $T_i$. The only cases where $v$ is not close to $b_i$ are if $v=g_{i,c}$ for some $c\in\{1,2,3\}$. However, in this case we have $d(t_r, g_{i,c})\leq 2a_{i,c}+2\leq 2m+2=k$, meaning $v$ is close to $t_r$.

Now, we will show that all vertices of $T$ are $k$-distinguished by some vertex in $R$. For this, let $q$ be some unknown vertex in $T$. Suppose all we know is the value of $d_k(q,y)$ for all $y\in R$, and we want to uniquely determine $q$. We have a few cases:
\begin{itemize}
    \item \textbf{Case 1: $q$ is not $k$-close to any top vertex.}
    
    In this case, $q$ must be within distance $k$ of exactly one bottom vertex, $b_i$. The only possible value of $q$ is $w_{i, k+d_k(q, b_i)}$.
    \item \textbf{Case 2: $q$ is $k$-close to some top vertex.}
    
    If $q$ is $k$-close to only one top vertex, then let that top vertex be $t_i$. Since we assumed $r-M \ge 2$, the only options for $q$ are $b_i=w_{i,k}$ and $w_{i,k+1}$, and $q$ can therefore be determined by its distance to $b_i$.
    
    Now assume $q$ is $k$-close to at least two top vertices. In this case, $q$ must be $k$-close to all top vertices. 
    
    If $q$ is equidistant to all top vertices, then it must be in $T_1\cup T_2\cup \cdots \cup T_M\cup \{u\}$. If $d(q,t_r)=1$, then $q=u$. Assume $d(q,t_r)>1$. 
    
    If $q$ is also $k$-close to some bottom vertex $b_i$, then it is in $T_i$. Let $w'$ be the closest vertex of the form $w_{i,j}$ to $q$. We know that $d(q,w')=d(q,t_r)+d(q,b_i)-k-1$, and $d(u,w')=d(u,q)-d(q,w')=d(q,t_r)-1-d(q,w')$. From these two quantities we can find out what $q$ is.
    
    If $q$ is not $k$-close to any bottom vertices, then it must be of the form $g_{i,c}$ for some $i\in\{1, 2, \ldots, M\}$, and $d(t_r, q)=2a_{i,c}+2 \le k$. Because none of the $S_i$ overlap for $i\in\{1, 2, \ldots, M\}$, $q$ can be determined solely by its distance from $t_r$.
    
    If $q$ is not equidistant to all top vertices, then there is some $i\in\{M+1, M+2, \ldots, r\}$ such that $d(q,t_i)$, is minimized. This means $q$ is in $T_i$. Assume that $q$ is $k$-close to $b_i$. Then, let $w'$ be the closest vertex to $q$ of the form $w_{i,j}$ for some $j$. Then $d(w', q)=d(t_i, q)+d(b_i, q)-(k-1)$, and $d(u, w') = d(u, q) - d(q, w') = d(q,t_r) + 1 - d(q,w')$. From these two quantities we can determine $q$.

    Finally, if $q$ is not $k$-close to $b_i$, then $q=g_{i,c}$ for some $c\in\{1, 2, 3\}$. Thus, $q$ can be determined from $d(q, t_i)$.
\end{itemize}

This shows that $R$ is a $k$-truncated dominating resolving set of $T$. Thus, there exists a $k$-truncated dominating resolving set of size $2r-M$.
\end{proof}

Together with Lemma~\ref{lem:dim_mld_close}, we see that the $k$-truncated metric dimension of $T$ is in the range $[2r-M-1,2r-M]$.

Now, assume that we can compute the minimum $k$-resolving set of $T$ in $O(n^c)$  time. By Lemma \ref{lem:dim_mld_close}, this means that we can approximate the minimum truncated dominating resolving set to within an additive factor of $1$, which means that we approximated the 3-Dimensional Matching problem for $U$ within an additive factor of $1$. However, this contradicts Theorem~\ref{thm:3dm-nphard}. Thus, computing $k$-truncated metric dimension on trees is NP-Hard, as desired.
\end{proof}

\section{Polynomial-Time Algorithm to Compute \\$k$-truncated Metric Dimension for Constant $k$}\label{sec:poly-alg}

There exists a polynomial-time algorithm to compute the $1$-truncated metric dimension (also known as adjacency dimension) of a rooted tree based on dynamic programming \cite{bib:1-truncated-tree}. We generalize this result to any arbitrary constant $k$.

Fix a positive integer $k$. In this section, we present an algorithm to compute the $k$-truncated metric dimension of a tree that runs in time polynomial in $n$, the number of vertices of the tree. For the sake of simplicity, we present a slightly unoptimized version of our algorithm, which runs in $O(n^2)$. It is possible to improve our algorithm to $O(n)$, but this is not the main focus on the paper.

First, we show an algorithm that computes the smallest $k$-truncated dominating resolving set of a tree $T$. We then show how this algorithm can be slightly modified to compute the $k$-truncated metric dimension of $T$.
    
\subsection{Computing the smallest $k$-truncated dominating resolving set}
We present a dynamic programming algorithm to compute the size of the smallest $k$-truncated dominating resolving set for any fixed constant $k$.

Let $T$ be a rooted tree with $n$ vertices and let $k$ be a positive integer. Label the vertices in the tree with integers $1, 2, \ldots, n$. For any vertex $u$ of $T$, let $\text{ch}(u)$ denote the number of children $u$ has, and let $T_u$ denote the subtree of $T$ rooted at $u$. For a vertex $u$ and positive integer $m\leq \text{ch}(u)$, define $c_m(u)$ to be the child of $u$ with the $m^{th}$ smallest label. For any $u\in V(T)$ and positive integer $m\leq \text{ch}(u)$, we define $T_{u,m}=\{u\}\cup T_{c_1(u)}\cup T_{c_2(u)}\cup \cdots \cup T_{c_m(u)}$. We define a function $f(u, C, D, E, l, m)$ as the size of the smallest set $S$ satisfying the following properties:
\begin{itemize}
    \item $S$ is a subset of $T'$, where $T':=T_{u,m}$.
    \item If $v$ is any vertex outside of $T'$ with $d(u,v)=l$, then $S\cup\{v\}$ is a $k$-truncated dominating resolving set of $T'$. If $l$ is $null$, then $S$ is a $k$-truncated dominating resolving set of $T'$. Note that if $l>k$, it cannot $k$-distinguish or be $k$-close to elements of $T'$, so we will only allow $l$ to be an integer between $1$ and $k$ or $null$.
    \item The set $\{d(u,z)\;|\;z\in S,\; d(u,z)\leq2k\}$ is equal to $C$.
    \item The set $\{d(u,y)\;|\;y\in T',\;d(u,y)\leq 2k:\;\forall z\in S,\;d(y,z)>k\}$ is equal to $D$.
    \item Call an $x \in V(T') \setminus S$ \textit{good} if it satisfies the following conditions:
     \begin{itemize}
        \item $d(x,u)\leq 2k$.
        \item there exists $z \in S$ where $d(x,z) \le k$.
        \item $d(x,z)-d(u,z)$ is equal for all $z\in S$ with $d(x,z) \le k$.
    \end{itemize}

    We start with an empty set $E$. For each \textit{good} $x$, we add the tuple $(d(x,u), d(x,z)-d(u,z), d(u,z'))$ to $E$, where $z \in S$ and $d(x,z) \le k$, and $z' \in S$ is a vertex with minimal $d(u,z')$ under the condition $d(x,z') > k$. If such $z'$ does not exist, let $d(u,z')=null$.
\end{itemize}
We define $g(u,C,D,E,l):=f(u,C,D,E,l,\text{ch}(u))$. Note that the size of the smallest $k$-truncated dominating resolving set of $T$ is just the minimum of $g(root,C,D,E,null)$ over all valid $C,D,E$. 

We will now show a recursive algorithm to compute $f$. Originally, set $f(u,C,D,E,l,m)$ to $null$ for all valid tuples of $u,C,D,E,l,m$. The rest of the proof is a long casework to describe all the necessary state transitions.

\paragraph{Base Case.} If $u$ is a leaf, the only possible values for $f(u,C,D,E,l,m)$ are $0, 1$, corresponding to the sets $\emptyset, \{u\}$. Because $u$ has no children, we only allow $m=0$. The valid combinations of $C,D,E,l$ that give $f(u,C,D,E,l,m)=0$ are $C=\emptyset, D=\{0\}$, $E=\emptyset$, $l\neq null$. The valid combinations of $C,D,E,l$ that give $f(u,C,D,E,l,m)=1$ are $C=\{0\}$, $D=\emptyset$, $E=\emptyset$, $l=null$.

\paragraph{Recursive Step: Adding a Parent.} From now on, we assume that $u$ is not a leaf. Suppose $m=1$. Let $v=c_1(u)$ and let $C, D, E, l$ be such that $g(v,C,D,l)$ is not $null$. Let $S$ be the corresponding set of marked vertices. Let $l'$ be the distance from $u$ to the closest marked vertex outside $\{u\}\cup T_v$. 

We have the following observations:
\begin{itemize}
    \item If $l=1$, then $u$ is marked, and $l'$ can take any value.
    \item If $2\leq l \leq k$, then $l'=l-1$.
    \item If $l=null$, then $l'$ is either $null$ or $k$.
\end{itemize} 
All we need to check is if $S$, along with the vertex corresponding to $l'$, $k$-locates and $k$-dominates $u$. If $l=1$, then $u$ is forced to be marked, giving $f(u,C',D',E',l',1)=|S|+1$. Assume instead that $1< l'\leq k$. We know that $u$ is $k$-dominated by $S$ because $l'\leq k$. Also, $u$ is the closest vertex to the vertex corresponding to $l'$, so it is automatically $k$-distinguished from all vertices of $T_v$, giving $f(u,C',D',E',l',1)=|S|$, where $C',D',E'$ are the values of the $C,D,E$ parameters corresponding to $T_v\cup\{u\}$ and set $S$. Now assume that $l=l'=null$. For any $x\in V(T_v)$, we want to check if $S$ $k$-distinguishes $u$ and $x$.

If there exist two distinct $z_1,z_2\in S$ such that $z_1,z_2$ are $k$-close to $x$ and $d(x,z_1)-d(v,z_1)\neq d(x,z_2)-d(v,z_2)$, then we claim that one of $z_1, z_2$ distinguishes $(x,u)$. To see this, note that
\begin{align*}
d(x,z_1)=d(u,z_1)
&\implies d(x,z_1)-d(v,z_1)=1
\\&\implies d(x,z_2)-d(v,z_2)\neq 1
\\&\implies d(x,z_2)\neq d(u,z_2),
\end{align*}
which implies that if $z_1$ does not distinguish $x,u$, then $z_2$ does. 

Now, suppose such $z_1,z_2$ do not exist. Note that $l'=null$ implies $x$ must be \textit{good}, i.e. $(d(x,v),d(x,z)-d(v,z), d(v,z'))\in E$ for some $z,z' \in S$ (where $z'$ might not exist). If $x,u$ are distinguished by some $z\in S$ that is $k$-close to $x$, then we have $d(x,z)\neq d(u,z)\iff d(x,z)-d(v,z)\neq 1$. If $x,u$ are distinguished by some other $z'\in S$ then it is necessary and sufficient for $d(v,z')\leq k-1$. Thus, $x,u$ are distinguished by $S$ if and only if $d(x,z)-d(v,z)\neq 1$ or $d(v,z')\leq k-1$. By checking if this is true for all elements of $E$, we can determine whether $S$ is a $k$-truncated dominating resolving set of $T_{v}\cup\{u\}$. If all $x,u$ pairs are distinguished, then $f(u,C',D',E',l',1)=|S|$. 

\paragraph{Recursive Step: Merging a child subtree.} Now assume $m>1$. Let $v=c_m(u)$, $T'=T_{u,m}$, $T_1=T_{u,m-1}$ and $T_2=T_v$. Pick valid $C_1, D_1, E_1, l_1$ such that $f(u,C_1,D_1,E_1,l_1,m-1)$ is not $null$ (call the set it corresponds to $S_1$), and valid $C_2, D_2, E_2, l_2$ such that $g(v,C_2,D_2,E_2,l_2)$ is not $null$ (call the set it corresponds to $S_2$). Let $u_1,u_2$ be vertices outside of $T_1,T_2$ respectively such that $d(u,u_1)=l_1$, $d(v,u_2)=l_2$. Firstly, we want to check if $S_1,S_2,l_1,l_2$ are compatible. We have a few cases.
\begin{itemize}
\item \textbf{Case 1: $u_1 \in S_2$ and $u_2 \in S_1$.}\\
For this to be true, we need to check that $l_1-1=\min(C_2)$ and $l_2-1=\min(C_1)$. In this case, we can let $l'$, the distance to the closest marked vertex to $u$ outside of $T'$, to be any integer between $\max(l_1, l_2-1)$ and $k+1$, or $null$.

\item \textbf{Case 2: $u_1 \in S_2$ and $u_2 \not\in T'$.}\\
For this, we need to check that $l_1-1=\min(C_2)$, that $l_2\leq 1+\min(C_1)$, and that $l_1\leq l_2-1$. In this case, $l'=l_2-1$.

\item \textbf{Case 3: $u_2 \in S_1$ and $u_1 \not\in T'$.}\\
For this, we need to check that $l_2-1=\min(C_1)$, $l_1\leq 1+\min(C_2)$, and $l_2\leq l_1+1$. In this case, $l'=l_1$.

\item \textbf{Case 4: $u_1,u_2 \not\in T'$.}\\
Here, we must have $u_1=u_2$. For this case, we need to check that $l_1+1=l_2$, $l_1\leq 1+\min(C_2)$, and $l_2\leq 1+\min(C_1)$. In this case, $l'=l_1$.

\item \textbf{Case 5: $l_1=null$ and $l_2 \neq null$.}\\
We require that $\min(C_2)+1>k$. Either $u_2 \in S_1$, which requires $l_2=\min(C_1)+1$ and allows for $l'$ to be any integer between $l_2-1$ and $k$ or $null$, or $u_2 \not\in T$, which requires $l_2\leq \min(C_1)+1$ and gives $l'=l_2-1$.

\item \textbf{Case 6: $l_1 \neq null$ and $l_2=null$.}\\
We require that $\min(C_1)+1>k$. Either $u_1 \in S_2$, which requires $l_1=\min(C_2)+1$ and allows for $l'$ to be any integer between $l_1$ and $k$ or $null$, or $u_1 \not\in T'$, which requires $l_1\leq \min(C_2)+1$ and gives $l'=l_1$.

\item \textbf{Case 7: $l_1=l_2=null$.}\\
In this case, we require $\min(C_1)+1, \min(C_2)+1>k$. We must have $l'=null$.
\end{itemize}

Only when one of the above cases is true, we can proceed with the $l'$ determined by that case. Now, we want to check if $S':=S_1\cup S_2 \cup \{u_1,u_2\}$ forms a $k$-truncated dominating resolving set of $T_{u,m}$. It suffices to check that for all $x\in V(T_1), y\in V(T_2)$, there is some $z$ in $S'$ that $k$-distinguishes $x$ and $y$. We have a few cases.
\begin{itemize}
\item \textbf{Case 1: some $z_1\in S_1$ is $k$-close to $x$, and some $z_2\in S_2$ is $k$-close to $y$}\\
In this case, we claim that $x$ and $y$ are $k$-distinguished by one of $z_1,z_2$. Assume for contradiction this is not true. Then
$d(x,z_1)=d(y,z_1), d(x,z_2)=d(y,z_2)$. Let $a=\lca(x,z_1),b=\lca(y,z_2)$. Then
\begin{align*}
0
&=d(x,z_1)-d(y,z_1)
\\&=d(x,a)-d(y,a)
\\&=(d(x,z_2)-d(a,z_2))-d(y,a)
\\&=d(y,z_2)-d(a,z_2)-d(y,a)
\\&=d(y,b)-d(a,b)-d(y,a)
\\&=d(y,b)-d(a,b)-(d(y,b)+d(a,b))
\\&=-2d(a,b)<0,
\end{align*}
a contradiction, as desired.

\item \textbf{Case 2: $x$ is not $k$-close to any element of $S_1$ and $y$ is not $k$-close to any element of $S_2$}\\
First, we need to check if either $x$ is $k$-close to some element of $S_2$ or $y$ is $k$-close to some element of $S_1$. Thus, we must check if either $d(x,u)+\min(C_2)+1$ or $d(y,v)+\min(C_1)+1$ are less than $k+1$. Note that $d(x,u)\in D_1, d(y,v)\in D_2$. If neither of these conditions are true, then we need to check if $x$ and $y$ are distinguished by a vertex outside $T'$; i.e. they are distinguished by the vertex corresponding to $l'$. For this, we need to check that $\min(k+1, d(x,u)+l')\neq \min(k+1, d(y,v)+l'+1)$. If this condition is not satisfied, then $S'$ does not $k$-distinguish $(x,y)$. Otherwise, $S'$ does distinguish $(x,y)$.

\item \textbf{Case 3: $x$ is $k$-close to some element of $S_1$, $y$ is not $k$-close to any element of $S_2$}

Note that $d(v,y)\in D_2$. If there exist two $z_1, z_2\in S_1$ that are $k$-close to $x$ such that $d(x,z_1)-d(u,z_1)\neq d(x,z_2)-d(u,z_2)$, then we claim that either $z_1$ or $z_2$ $k$-distinguished $x$ and $y$. To show this, assume that $z_1$ does not $k$-distinguish $x$ and $y$. Then
\begin{align*}
d(x,z_1)
=d(y,z_1)
&\implies d(x,z_1)-d(u,z_1)=d(u,y)
\\&\implies d(x,z_2)-d(u,z_2)\neq d(u,y)
\\&\implies d(x,z_2)\neq d(z_2,y),
\end{align*}
which means $z_2$ distinguishes $x,y$.

From now on, assume that such $z_1,z_2$ do not exist. This means that $d(x,z)-d(u,z)$ is constant for all $z\in S_1$ that are $k$-close to $x$. This means that $E_1$ contains the tuple $(d(u,x), d(x,z)-d(u,z), d(u,z'))$, where $z$ is an arbitrary element of $S_1$ that is $k$-close to $x$, while $z'$ is the closest vertex to $u$ of all elements of $S_1$ that are not $k$-close to $x$. We will now check if $x,y$ are distinguished. We have a few cases:
\begin{itemize}
    \item \textbf{Subcase 3.1: $x,y$ are distinguished by an element of $S_2$}\\ We just need to check if $x$ is $k$-close to an element of $S_2$. For this, we just need to see if $d(x,u)+\min(C_2)+1\leq k$.
    \item \textbf{Subcase 3.2: $x,y$ are distinguished by an element of $S_1$} \\For any $z\in S_1$ that is $k$-close to $x$, we have $d(x, z)\neq d(y,z) \iff d(x,z)-d(u,z)\neq d(u,y)$, so it suffices to check that $d(x,z)-d(u,z)\neq d(u,y)=1+d(v,y)$. For any $z' \in S_1$ not $k$-close to $x$, we need to check that the closest such $z'$ to $u$ is $k$-close to $y$, i.e. we need to ensure that $d(u,z')+d(v,y)+1\leq k$.
    \item \textbf{Subcase 3.3: $x,y$ are distinguished by a marked vertex outside of $T'$, i.e. by the vertex corresponding to $l'$}\\
    For $x,y$ to be distinguished, we require $d(x,u)+l'\neq d(y,v)+l'+1$, which is equivalent to $ d(x,u)\neq d(y,v)+1$ and $\min(d(x,u)+l', d(y,v)+l'+1)\leq k$.
\end{itemize}
Vertices $x$ and $y$ are distinguished if and only if at least one of these cases is true. 

\item \textbf{Case 4: $y$ is $k$-close to some element of $S_2$, $x$ is not $k$-close to any element of $S_1$}\\
This case is analogous to Case 3.

We must ensure that, for any $x\in V(T_1), y\in V(T_2)$, one of the above cases is true. We will now show an algorithm that does all of these checks in $O(A(k))$ time, where $A$ is some function that has only $k$ as a variable (i.e. is independent of $n$). The algorithm will consist of the following steps:
\begin{itemize}
    \item Check that for every $d_1\in D_1, d_2\in D_2$, we have $\min(k+1, d_1+l')\neq \min(k+1, d_2+l'+1)$. This checks all scenarios of Case 2.
    \item For every tuple $(d(u,x), d(x,z)-d(u,z), d(u,z'))\in E_1$ and every $d_2\in D_2$, check that at least one of the following is true:
    \begin{itemize}
        \item See if $d(x,u)+\min(C_2)+1\leq k$. This checks Case 3 Subcase 1.
        \item See if $d(x,z)-d(u,z)\neq 1+d_2$ or $d(u,z')+d_2+1\leq k$. This checks Case 3 Subcase 2.
        \item See if $d(x,u)\neq d_2+1$ and $\min(d(x,u)+l', d_2+l'+1)\leq k$. This checks Case 3 Subcase 3.
    \end{itemize}
    This checks all scenarios of Case 3.
    \item For every tuple $(d(v,y), d(y,z)-d(v,z), d(v,z'))\in E_2$ and every $d_1\in D_1$, check that at least one of the following is true:
        \begin{itemize}
        \item See if $d(y,v)+\min(C_1)+1\leq k$. This checks Case 4 Subcase 1.
        \item See if $d(y,z)-d(v,z)\neq 1+d_1$ or $d(v,z')+d_1+1\leq k$. This checks Case 4 Subcase 2.
        \item See if $d(y,v)\neq d_1-1$ and $\min(d(y,v)+l'+1, d_1+l')\leq k$. This checks Case 4 Subcase 3.
    \end{itemize}
    This checks all scenarios of Case 3.
\end{itemize}
\end{itemize}
Note that the time to complete each step depends only on $k$ and not on $n$, and by completing the steps we check all scenarios of the cases described above. If during the algorithm, one of the scenarios does not satisfy the conditions, we know that $S'$, along with the vertex corresponding to $l'$ is not a valid $k$-truncated dominating resolving set of $T'$. If all scenarios pass their conditions, then we get a valid $k$-truncated dominating resolving set for $T'$. 

Now, we will show how to get parameters $C,D,E$ for $T',S'$ from parameters $C_1,D_1,E_1,C_2,D_2,E_2$. We have that \[C=C_1\cup \{c+1\;|\;c\in C_2, c\leq 2k-1\}\]
\[D=\{d\;|\;d\in D_1, d+\min(C_2)+1>k\}\]
\[\cup \;\{d+1\;|\;d\in D_2, d\leq 2k-1, d+\min(C_1)+1>k\}.\]
Consider $x,z_1,z'$ such that $(d(x,u), d(x,z_1)-d(u,z_1),d(u,z'))\in E_1$. If $x$ is not $k$-close to any elements of $S_2$ (i.e. $d(u,x)+\min(C_2)+1>k$), then $E$ should include $(d(x,u), d(x,z_1)-d(u,z_1),\min(d(u,z'), \min(C_2)+1))$. Assume instead that $x$ is $k$-close to $z_2\in S_2$. Let $c$ be the smallest element of $C_2$ such that $d(u,x)+1+c>k$. We have $d(x,z_2)-d(u,z_2)=d(u,x)$. Thus $E$ must include $(d(x,u), d(x,z_1)-d(u,z_1),\min(d(u,z'), c+1))$ if and only if $d(u,x)\neq d(x,z_1)-d(u,z_1)$.

Similar reasoning holds for tuples in $E_2$. Let $(d(y,v),d(y,z_1)-d(v,z_1),d(v,z'))$ be an arbitrary element of $E_2$. If $y$ is not $k$-close to any element of $C_1$ (i.e. $d(y,v)+1+\min(C_1)>k$), then $E$ should include \\$(d(y,v), d(y,z_1)-d(v,z_1),\min(d(v,z'), \min(C_1)))$. Assume instead that $y$ is $k$-close to some $z_2\in S_1$. Let $c$ be the smallest element of $C_1$ such that $d(u,x)+1+c>k$. We have $d(y,z_2)-d(v,z_2)=d(y,v)$. Thus, $E$ includes $(d(y,v), d(y,z_1)-d(v,z_1), \min(d(y,z'), c))$ if and only if $d(y,z_1)-d(v,z_1)=d(y,v)$. Thus, we can get $C,D,E$ from $C_1,D_1,E_1,C_2,D_2,E_2$ in time that depends only on $k$.

If none of the scenarios in the algorithm fail their checks, and if we have that $f(u,C,D,E,l,m)$ is $null$ or larger than $|S'|$, then we set it to $|S'|$. By running through all valid combinations of $C_1, D_1, E_1, l_1, C_2, D_2, E_2, l_2$, and all valid $l'$ (the number of which is a function of only $k$), we will find the smallest possible value of $f(u,C,D,E,l,m)$, as desired.

Note that because each recursive step has runtime depending only on $n$, computing the size of the minimal $k$-truncated dominating resolving set of $T$, which is just the minimum of $g(root, C, D, E, null)$ over all valid $C,D,E$, has runtime that is linear in $n$ for fixed $k$.
\subsection{Computing $k$-truncated metric dimension}
We will now show how to modify the algorithm in the previous section to compute $\dim_k(T)$, the $k$-truncated metric dimension of $T$. Note that the only difference is that now a vertex is allowed to be $k$-far from all vertices in the resolving set.

Let $T$ be a tree and $k$ be a positive integer. Let $S$ be the smallest $k$-resolving set of $T$. We know that either every element of $V(T)$ is $k$-close to some element of $S$, or exactly one element of $V(T)$ is not $k$-close to any element of $S$. In the first case, $S$ is the minimum $k$-truncated dominating resolving set of $T$, meaning it can be found with the algorithm in the previous section. Let the size of the minimal $k$-truncated dominating resolving set of $T$ be $s_{\text{mdr}}$. Now let $r$ be an arbitrary element of $V(T)$, and assume that $r$ is $k$-far from every element of $S$. Root the tree $T$ at $r$. We will now demonstrate how the algorithm from Section~\ref{sec:poly-alg} can be modified to give the minimal set $S$ that resolves $T$, while every element of $S$ is $k$-far from $r$.

Let $v_1, v_2, \ldots, v_p$ be the children of $u$. We compute $g(v_i, C_i, D_i, E_i, null)$ for all $i\in \{1, 2, \ldots, p\}$ and all valid $C_i,D_i,E_i$, using the algorithm described in the previous section. We want to ensure that $u$ is $k$-far from all marked vertices, which is why we require $l=null$, which implies that $D_i$ must be $\emptyset$. This means we also require $\min(C_i)\geq k$.

We now want to compute the smallest resolving set $S$ of $T$, such that the only vertex not $k$-dominated by $S$ in $T$ is $r$. Let $S_i$ be $S\cap T_i(r)$ for $i\in\{1, 2, \ldots, p\}$. We know that for all $i$, $S_i$ must be the set corresponding to $g(v_i, C_i, D_i=\emptyset, E_i, l_i=null)$, for some valid $C_i$, $E_i$ satisfying $\min(C_i)>k$. Note that because $l_i=null$, $S_i$ on its own must be a $k$-truncated dominating resolving set of $T_i(r)$. Let us define $q_i$ to be the smallest value of $g(v_i, C_i, D_i=\emptyset, E_i, l_i=null)$ over all valid $C_i, E_i$ satisfying $\min(C_i)\geq k$. Then we must have that $|S|=q_1+q_2+\cdots+q_p$. 

Let $s_{\min}$ be the minimal value of $|S|$ over all $r\in V(T)$. We must have that $\dim_k(T)=\min(s_{\text{mdr}}, s_{\min})$. This finishes the description of the algorithm.

\section{Conclusion and Future Work}

In this paper, we focused on computing the truncated metric dimension of trees. We showed that computing $k$-truncated metric dimension of trees is NP-hard for general $k$, but for any constant $k$ it can be solved in polynomial time.

Many open questions remain regarding the computation of $k$-truncated metric dimension.
\begin{itemize}
    \item What is the best dependence on $k$ we can get in an algorithm to compute $\dim_k(T)$, the $k$-truncated metric dimension of a tree $T$?
    \item It is known that for general graphs, the best approximation ratio for computing metric dimension is $\Theta(\log n)$ \cite{bib:lower-bound-dim}. On the other hand, it is possible to compute metric dimension of trees in linear time \cite{bib:tree-dim-2000}. What is the best approximation ratio we can obtain for $k$-truncated metric dimension of trees?
    \item It is known that we cannot efficiently compute $k$-truncated metric dimension in general graphs (even when $k$ is a small constant) \cite{bib:lower-bound-dim}. However, can we efficiently compute truncated metric dimension in other classes of graphs for any constant $k$?
\end{itemize}   

\section{Acknowledgements}

We thank Jesse Geneson for providing the project and discussions about problems related to truncated metric dimension. We also thank Felix Gotti and Tanya Khovanova for reviewing the paper and providing many helpful comments to improve its presentation. Finally, we are grateful for the MIT PRIMES program for the opportunity to carry out this research.

\end{document}